\documentclass[12pt]{article}
\usepackage[T1]{fontenc}
\usepackage{lmodern}
\usepackage[margin=1in]{geometry}
\usepackage{amsmath,amssymb,amsthm,mathtools}
\usepackage{booktabs,array,microtype,needspace}
\usepackage[round,authoryear]{natbib}
\setcitestyle{aysep={,},yysep={;}}
\usepackage[colorlinks=true,allcolors=blue]{hyperref}
\hypersetup{pdftitle={Who Enters the Record? Paywalls, Contributor Selection, and Social Learning},pdfauthor={Maria Kovalenko and Georgy Lukyanov}}
\newtheorem{theorem}{Theorem}
\newtheorem{proposition}{Proposition}
\newtheorem{lemma}{Lemma}

\theoremstyle{remark}
\newcommand{\E}{\mathbb E}
\newcommand{\Pp}{\mathbb P}
\newcommand{\ind}{\mathbf 1}
\newcommand{\sig}{\sigma}

\newcommand{\HH}{\mathcal H}
\title{Who Enters the Record?\\Paywalls, Contributor Selection, and Social Learning}
\author{Maria Kovalenko\thanks{International College of Economics and Finance, National Research University Higher School of Economics, Moscow, Russia. Email: \href{mailto:mvkovalenko@edu.hse.ru}{mvkovalenko@edu.hse.ru}.}\and Georgy Lukyanov\thanks{Toulouse School of Economics, Toulouse, France. Corresponding author. Email: \href{mailto:georgy.lukyanov@tse-fr.eu}{georgy.lukyanov@tse-fr.eu}.}}
\date{September 2026}
\begin{document}
\maketitle
\begin{abstract}
This paper studies social learning when buying access to previous decisions also determines whose decision enters the record. Individuals first receive private information and then decide whether to pay for access. A positive price selects relatively weak private beliefs. With one founding observation, every buyer copies it, and the record acquires no further information. With a richer initial history, buyers can disagree and the record can improve; nevertheless, a price bounded away from zero prevents complete learning, even with unbounded private beliefs. We then consider a rule under which a buyer makes a preliminary choice before access and a revised choice afterward. A positive probability of implementing the preliminary choice makes it follow the private signal. Retaining both choices restores complete learning under a fixed effective fee or pooled myopic pricing, including with bounded beliefs. The rule reduces willingness to pay and sacrifices some current decision accuracy. A worked example shows how the subsequent improvement in information can outweigh both initial losses. The results show why preserving a judgment made before exposure can change what later users learn from the same selected population.
\end{abstract}
\medskip
\noindent\textbf{Keywords:} social learning; costly observation; contributor selection; independent judgments; information cascades.\\
\noindent\textbf{JEL classification:} D82, D83, L86.

\section{Introduction}\label{sec:intro}

When individuals are uncertain about a decision, they often look at the decisions of others. A person considering an investment may consult earlier recommendations; someone evaluating a new product may seek the experience of previous users; and a participant in a forecasting discussion may revise a judgment after reading the views already expressed. In each case, the value of observing others depends on how much the individual already knows. Someone with a strong private reason to choose one alternative has less to gain from consulting the group than someone who is still undecided.

Suppose that obtaining this information requires a payment. Those who rely on their own information remain outside, while those who expect to benefit from the history are willing to buy access. This familiar selection becomes consequential for social learning when the service records only the decisions made by its customers. The person who declines to buy still makes a decision, but that decision is absent from the evidence available to the next customer. The history is therefore produced by a population selected for its willingness to consult the history itself.

To fix ideas, consider a service that sells access to earlier recommendations about a particular project. Each arriving user has conducted some private research and must decide whether the project is worthwhile. A buyer reads the recommendations and records a decision through the service; a nonbuyer decides independently, and the service does not observe that choice. Our question is whether an increasingly long record of customer decisions necessarily provides increasingly reliable guidance, and what the service would have to record to preserve the information of its users. We use this example as a stylized decision environment. The analysis concerns a particular issue, with private research preceding the access decision, rather than a membership purchased before a user knows which issues will arise.\footnote{The payoff below is the user's benefit from a correct decision, normalized to one. It is not a prediction prize paid out of the provider's access revenue. In a laboratory implementation, accuracy can instead be rewarded by the experimenter, with that payment budget kept separate from fee revenue.}

This paper develops a binary-state model in which observation and contribution rights are sold together. Individuals first receive conditionally independent private signals. They then decide whether to pay for the archive and choose the action that is best given their information. The contents of the archive are hidden before payment, but its information structure and recording rule are known. Only a buyer's action enters the record. All individuals understand this selection and update by Bayes' rule.

The one-founder case makes the mechanism particularly transparent. If the initial recommendation has known accuracy, the only individuals who find it worth purchasing are those whose private information is sufficiently weak. At every positive price, their signals are too weak to reverse the recommendation after access. Each buyer therefore adds another copy, while the agents with sufficiently strong signals to contradict it remain outside. Moreover, entry selects on the strength of a signal rather than its direction, so the decision to enter supplies no additional information about the state. The archive remains as informative as its first observation, however many customers it attracts (Proposition~\ref{prop:founder}).

With several founding observations, this immediate copying result needs qualification. Before paying, a buyer may expect to encounter either a strong or a weak history. Information can be worth purchasing because the strong realization would change the buyer's decision, even though the weak realization does not. A buyer may then contradict the recommendation actually observed. We give a three-founder example and show how to update the record correctly. The broader impediment to learning nevertheless remains: a price bounded away from zero puts a uniform bound on the private evidence of every buyer. Recording only buyers' subsequent actions then puts a finite bound on the archive's log odds as well (Theorem~\ref{thm:nonlearning}). The record can improve without eventually revealing the state.

We next ask whether the same selected buyers can supply the missing information. They can, provided that an independent judgment is preserved before exposure. A buyer who subsequently follows the archive may initially have preferred either action, and that initial direction remains informative about the state. Simply asking for a preliminary opinion does not establish an incentive to report it accurately. We therefore consider a consequential choice: before seeing the archive, the buyer commits to a preliminary action; after access, the buyer submits a revised action; and an announced lottery determines which action is implemented. Both choices are retained, with their order clearly identified.

This rule has a direct effect on the access decision. If the preliminary choice is implemented with probability $\varepsilon$, the value of consulting the archive is multiplied by $1-\varepsilon$. The provider can preserve the entry set at a given information experiment by reducing the fee in the same proportion. Conditional on joining, each individual strictly prefers a preliminary choice that follows the private signal. Under either a fixed effective fee or the pooled myopic pricing rule specified below, entry remains sufficiently broad for these choices to be uniformly informative. The resulting archive learns the state even when private beliefs are bounded (Theorem~\ref{thm:learning}). Thus the selected sample is capable of supplying information that its post-access actions alone would suppress.

There is a cost to making the preliminary choice consequential. Sometimes it is implemented even though the revised choice would be better informed. A fixed positive implementation probability therefore leaves a positive limiting decision error, despite complete learning by the archive. This distinction matters for the economic interpretation. In our bounded-signal example, the reporting rule initially lowers both access revenue and decision accuracy. As the record improves, it becomes useful to more individuals, and the later gains outweigh both losses. The example compares outcomes under two specified policies; the value of adopting either policy over the provider's lifetime also depends on the horizon and discounting.

\subsection{Relation to the literature}

The starting point is the literature on informational cascades and observational learning. \citet{bhw1992} show how an individual may rationally disregard private information after observing earlier actions. \citet{smith2000} establish the importance of the distinction between bounded and unbounded private beliefs; \citet{bikhchandani2024} review this literature. In our benchmark, the private signal distribution can be unbounded, but paying for observation selects a bounded part of that distribution. The individuals who could provide sufficiently strong contrary evidence do not become observable contributors.

Costly observation and the decision to acquire social information are studied by \citet{kultti2007} and \citet{song2016}. Song is the closest benchmark for the selection at the access decision: strong private information reduces an individual's incentive to observe predecessors. In his symmetric equilibrium result, sufficiently strong private beliefs and expanding observation capacity allow the observed sequence to reveal the state because nonobservers' actions remain available as evidence. Our change is that the action of a nonobserver is also excluded from the record. This connects the cost of consuming social information to the production of the next observation. \citet{arieli2022} also study how pricing affects social learning. Here the object being purchased is the history itself, and the access decision governs contribution to that history. The reporting rule recovers independent evidence from people who still choose access after receiving their signals.

The coverage of the record is itself a substantive part of a learning environment. \citet{guarino2011} and \citet{herrera2013} examine learning when only one of the available actions is observable. In Herrera and H\"orner, the absence of an investment itself conveys information; here the symmetric access decision is state-neutral. \citet{parakhonyak2023} show how endogenously chosen capacity restrictions can limit learning even with unbounded private signals. Our restriction operates through a uniform fee for observing the history and affects both action directions symmetrically. The comparison with occasional recorded nonbuyers in Section~\ref{sec:outsiders} makes the coverage margin explicit.

Selection also matters when consumers learn from reviews. \citet{ifrach2019} and \citet{acemoglu2022} study the information generated by a selected population of purchasers. The observation here has a different origin: it is a decision made after reading the inherited record, rather than a new experience signal obtained from consuming a product. Access can therefore change not only who contributes, but also what a contribution reveals. \citet{smirnov2025} endogenize communication by purchasers who care about future consumers. Their reviews describe new consumption experiences, and altruism creates strategic distortion. Here users care about their own decisions, and the reporting rule preserves information held before access. More broadly, \citet{markovich2024} study the commercial, private and public benefits of platform data. Our focus is the information that survives when consuming the existing data and contributing to it are governed by the same rule.

Preserving independent judgments before social influence is an established principle. \citet{frey2021} distinguish an independent start from sequential choices exposed to earlier decisions. \citet{peng2025} show how selective disclosure of signal-consistent actions can preserve private information, although spontaneous disclosure in their experiment is not selective enough to improve learning. Their disclosure choice has no direct monetary payoff consequence. We instead make the preliminary choice consequential and study how the rule affects both paid entry and the information supplied by those who enter. Random selection among consequential decisions has an established incentive rationale \citep{azrieli2018}; we verify the incentives directly for our timing and payoff assumptions. The analysis then connects the resulting willingness to pay to continued participation and complete learning within the selected population. Advance purchase provides a separate contract-timing comparison with familiar pricing logic \citep{xie2001,nocke2011}.

The rest of the paper proceeds as follows. Section~\ref{sec:model} presents the environment and the closed-record benchmark. Section~\ref{sec:preliminary} studies preliminary choices, entry, learning and the cost of implementation. Section~\ref{sec:richer} considers richer initial histories and occasional observation of nonbuyers. Section~\ref{sec:conclusion} concludes. Proofs and additional calculations appear in the appendix.

\section{The environment and a closed-record benchmark}\label{sec:model}

\subsection{Private information and the record}

There is an unknown state $\theta\in\{-1,+1\}$, with probability one half on each state. Individuals arrive sequentially. Each observes a private log-likelihood ratio $X$ before deciding whether to purchase social information. Signals are conditionally independent across individuals and independent of the initial record given the state. We work directly with densities satisfying
\begin{equation}
 f_+(x)=e^x f_-(x),\qquad f_+(x)=f_-(-x).
 \label{eq:signals}
\end{equation}
The densities are continuous and positive on $(-\bar x,\bar x)$, where $0<\bar x\leq\infty$. When $\bar x$ is finite, private beliefs are bounded; otherwise they are unbounded. Define $\sig(x)=(1+e^{-x})^{-1}$. An individual with private log odds $x$ alone chooses $\operatorname{sign}x$ and achieves expected accuracy $\sig(|x|)$.

An individual obtains payoff $\ind\{a=\theta\}$ from the implemented binary action $a$, less any access fee. The individual has no payoff from influencing later users and no additional preference over what the record says. The fee is a transfer to the provider. There are no further costs of reading in the model.

In the benchmark, the initial record consists of one founding recommendation $Y\in\{-1,+1\}$. Its state-conditional accuracy and log-likelihood strength are
\begin{equation}
 \Pp(Y=\theta\mid\theta)=r\in(1/2,1),\qquad
 h=\log\frac{r}{1-r}.
 \label{eq:founder}
\end{equation}
The founder's recommendation is based on information independent of subsequent users' signals conditional on the state. We shall also allow a symmetric initial experiment $\HH_0$ with finite log odds $L_0$ almost surely. Symmetry means that reflecting all directions in a realization swaps its two state-conditional probabilities. A finite collection of independent founding recommendations is one example. The one-founder case has $L_0=hY$.

Let $\HH_n$ denote the full archive after $n$ completed member records, and let
\begin{equation}
 L_n=\log\frac{\Pp(\theta=+1\mid\HH_n)}{\Pp(\theta=-1\mid\HH_n)}.
 \label{eq:logodds}
\end{equation}
The archive preserves the initial observations, subsequent contributions and their order. Its contents are hidden before payment. An arriving individual knows its size, the recording rule, the signal technology and the posted fee, but sees neither its direction nor its realized confidence. A buyer obtains the full archive. Its log odds summarize all the information relevant to the binary decision.\footnote{A displayed count that suppresses which observations preceded exposure would be a different information product. Here readers know the origin of the data and interpret dependent observations correctly. In particular, repeated agreement does not cause Bayesian readers to become overconfident.}

The posted fee at size $n$ is the same across all private archive realizations of that size. We refer to this as a \emph{pooled} fee. A prescribed size-dependent rule, or a pricing procedure that does not inspect content, has this form. Later we allow the fee to maximize current expected revenue within this class. We do not allow a privately informed provider to communicate through its choice of price.

\subsection{Entry and the value of observation}\label{sec:access}

Under the initial recording rule, an individual observes $X$ and chooses whether to pay $c_n$. A buyer reads $\HH_n$, chooses an optimal action, and that action is appended. A nonbuyer chooses from private information alone, and the action is absent from the archive. Thus $n$ counts customers rather than all arrivals.

Write $\nu_{\theta,n}$ for the state-conditional distribution of the archive log odds. Given $X=x$, the value of reading the archive is the increase in optimal expected accuracy over $\sig(|x|)$. Denote this increase by $V_n(x)$. Also put
\begin{equation}
 T=|X|,\qquad D(b)=\Pp(T\leq b).
 \label{eq:demand}
\end{equation}
The distribution of $T$ is identical in the two states by symmetry. If the signal support is bounded, $D(b)=1$ for $b\geq\bar x$.

\Needspace{9\baselineskip}
\begin{lemma}\label{lem:access}
For a symmetric archive experiment and $x\geq0$,
\begin{equation}
 V_n(x)=\frac{1}{1+e^x}\int_{l<-x}(1-e^{x+l})\,\nu_{-,n}(dl).
 \label{eq:generalvalue}
\end{equation}
The value is even, continuous and strictly decreasing where positive, and
\begin{equation}
 V_n(x)\leq1-\sig(|x|).
 \label{eq:perfectbound}
\end{equation}
For $0<c_n<V_n(0)$, entry occurs on an interval $|X|\leq b_n$, up to null ties, with the cutoff capped at $\bar x$ if all types buy. Entry and nonentry are uninformative about the state, including conditional on the actual archive.
\end{lemma}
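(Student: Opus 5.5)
The plan is to compute the buyer's accuracy gain directly in terms of the archive's state-conditional log-odds distribution. Two consequences of symmetry do most of the work. The first is the likelihood-ratio identity $\nu_{+,n}(dl)=e^{l}\nu_{-,n}(dl)$: the log odds $L_n$ is a sufficient statistic for the archive, so its conditional law has this ratio. The second is the reflection property: $\nu_{+,n}$ is the image of $\nu_{-,n}$ under $l\mapsto -l$, which follows from the symmetry of $\HH_0$ together with the symmetric recording rule, applied inductively.

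\emph{Value formula.} Fix $x\geq 0$. Because signals are independent of the archive given $\theta$, a buyer's posterior log odds is $x+L_n$, and the buyer chooses $+1$ iff $l>-x$. Ties have no effect on accuracy, since the buyer is then indifferent. Conditional on $X=x$, the expected accuracy with access is
\[
\sig(x)\,\nu_{+,n}(l>-x)+(1-\sig(x))\,\nu_{-,n}(l<-x).
\]
Without access it is $\sig(x)$. Writing $\sig(x)\,\nu_{+,n}(l>-x)=\sig(x)-\sig(x)\,\nu_{+,n}(l\leq -x)$, the difference becomes
\[
(1-\sig(x))\,\nu_{-,n}(l<-x)-\sig(x)\int_{l<-x}e^{l}\,\nu_{-,n}(dl).
\]
Here the boundary $l=-x$ can be included or dropped freely, because there the two terms cancel. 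Substituting $\sig(x)=e^{x}/(1+e^{x})$ gives \eqref{eq:generalvalue}. Evenness follows because the problem at $-x$ is the mirror image of the problem at $x$ under the reflection symmetry of $\nu$ and of the signal densities \eqref{eq:signals}.

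\emph{Properties.} Write the integrand as $g(x,l)=(1-e^{x+l})^{+}$ integrated over all $l$. This function is continuous in $x$ and bounded by $1$, so dominated convergence gives continuity of $V_n$ even when $\nu_{-,n}$ has atoms. This is the point I expect to need care with, and the vanishing of the integrand at $l=-x$ is exactly what handles it. For $x\ge 0$, $g$ is nonincreasing in $x$ for each $l$, and strictly decreasing on $\{l<-x\}$. The prefactor $1/(1+e^x)$ is strictly decreasing and positive. Hence $V_n$ is strictly decreasing wherever $\nu_{-,n}(l<-x)>0$, which is exactly where $V_n>0$. For the bound, $g\le 1$ and $\nu_{-,n}$ is a probability measure, so $V_n(x)\le 1/(1+e^{x})=1-\sig(|x|)$, which is \eqref{eq:perfectbound}.

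\emph{Entry set.} An individual buys iff $V_n(X)>c_n$, or in case of equality is indifferent, which occurs on a null set since $X$ has a density. By evenness and strict monotonicity on the positive region, for $0<c_n<V_n(0)$ there are two cases. If $V_n$ falls below $c_n$ within the support, there is a unique $b_n>0$ with $V_n(b_n)=c_n$, and entry is $\{|X|\le b_n\}$. Otherwise all types buy, and we set $b_n=\bar x$.

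\emph{Uninformativeness.} The fee is pooled, so $b_n$ does not depend on the archive realization. Entry is therefore the event $\{T\le b_n\}$. Conditional on $\theta$ and on $\HH_n$, $T=|X|$ is independent of the archive. By $f_+(x)=f_-(-x)$, the law of $T$ is the same in both states. Hence $\Pp(\text{entry}\mid\theta,\HH_n)=D(b_n)$, which is free of $\theta$ and $\HH_n$, so neither entry nor nonentry moves beliefs about the state, even conditional on the actual archive.
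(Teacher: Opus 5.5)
Your proposal is correct and follows essentially the paper's route: the gain from switching to the negative action on $\{l<-x\}$ is rewritten via $d\nu_{+,n}/d\nu_{-,n}=e^{l}$ (this holds under the equal prior for any experiment, so it is not really a consequence of symmetry); continuity and strict decrease come from dominated convergence and a positive-measure comparison for $(1-e^{x+l})_+$; and neutrality follows from the state-independent law of $T$ together with conditional independence of the incoming signal from the archive. The differences are cosmetic: you read \eqref{eq:perfectbound} off the bound $(1-e^{x+l})_+\le 1$ rather than comparing with perfect information, and your with-access accuracy expression omits the tie atom at $l=-x$, which contributes $(1-\sig(x))\,\nu_{-,n}(\{-x\})$ and then cancels as you say, so you should include it explicitly for the boundary bookkeeping to be exact.
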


The formula has a simple interpretation. Someone with $x\geq0$ would choose the positive action without access. Information is useful only when the archive is sufficiently negative to reverse that choice. The integral in \eqref{eq:generalvalue} adds the improvement over those realizations. As private confidence increases, reversal requires stronger contrary information, and the value of observation falls.

The neutrality of entry follows from the same symmetry. Buying reveals that $|X|$ is small enough, but gives no indication of the sign of $X$. Since an incoming signal is conditionally independent of the existing record, this remains true for someone who already knows the archive's contents. Waiting times and the number of nonbuyers therefore provide no additional directional news under the stated rules.

\subsection{Why a one-founder record stops learning}

With the founding recommendation alone, the value function is especially simple:
\begin{equation}
 V_0(x)=[r-\sig(|x|)]_+.
 \label{eq:foundervalue}
\end{equation}
When $|x|\geq h$, the recommendation never changes the individual's action. When $|x|<h$, following it is optimal in either direction and gives accuracy $r$. The value in \eqref{eq:foundervalue} is the difference between this accuracy and the private alternative.

For a fixed fee $0<c<r-1/2$, define the uncapped cutoff
\begin{equation}
 \beta(c)=\log\frac{r-c}{1-r+c}<h,
 \qquad b(c)=\min\{\bar x,\beta(c)\}.
 \label{eq:foundercutoff}
\end{equation}
The strict inequality is the source of the copying result.

\Needspace{7\baselineskip}
\begin{proposition}\label{prop:founder}
Suppose that only buyers' post-access actions are recorded, the initial archive is $Y$, and $0<c<r-1/2$. A buyer enters if and only if $|X|\leq b(c)$ and then chooses $Y$. After every number of member actions,
\begin{equation}
 L_n=hY,\qquad \Pp(\theta=Y\mid\HH_n)=r.
 \label{eq:freeze}
\end{equation}
There are infinitely many buyers almost surely. Their actions are all wrong with probability $1-r$.
\end{proposition}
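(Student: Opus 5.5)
The plan is an induction on the number $n$ of completed member records. The inductive hypothesis is that $L_n=hY$: every appended action equals $Y$, so the archive $\HH_n$ is a deterministic function of $Y$ together with entry events that are uninformative about the state.

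\emph{Base case.} For $n=0$ the archive is $Y$ alone, so $L_0=hY$ by \eqref{eq:founder}.

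\emph{Inductive step.} Given $L_n=hY$, the state-conditional distribution $\nu_{\theta,n}$ is the same as for the founder alone. Hence $V_n=V_0=[r-\sig(|x|)]_+$ by \eqref{eq:foundervalue}; alternatively this follows from \eqref{eq:generalvalue} with a two-point $\nu_{-,n}$ on $\{\pm h\}$ with weights $r,1-r$. Since $V_0$ is even and strictly decreasing in $|x|$ where positive, an individual buys iff $r-\sig(|X|)\ge c$. This is equivalent to $|X|\le\beta(c)$, capped at $\bar x$ when all types buy, i.e.\ $|X|\le b(c)$, up to a null tie; this also matches Lemma~\ref{lem:access}, since $0<c<r-1/2=V_0(0)$.

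A buyer's posterior log odds are $X+hY$. Because $|X|\le b(c)\le\beta(c)<h$ by \eqref{eq:foundercutoff}, the sign of $X+hY$ is the sign of $Y$, so the buyer chooses $Y$. The appended action is then a deterministic function of $Y$ and of the event $\{|X|\le b(c)\}$, and Lemma~\ref{lem:access} says this event is uninformative about the state even conditional on the archive. Hence the likelihood ratio of $\HH_{n+1}$ equals that of $\HH_n$, giving $L_{n+1}=hY$. Then $\Pp(\theta=Y\mid\HH_n)=\sig(h)=r$, which establishes \eqref{eq:freeze}.

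\emph{Infinitely many buyers.} Each arrival's signal is independent of earlier arrivals given the state. Each buys with probability $D(b(c))$, identical in both states by symmetry. This probability is positive because $b(c)>0$: $c<r-1/2$ gives $\beta(c)>0$, and the density is positive near $0$. Conditional on $\theta$ the arrivals are i.i.d., so by Borel--Cantelli (or simply since geometric waiting times are finite a.s.) there are infinitely many buyers almost surely.

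\emph{All actions wrong.} Every buyer chooses $Y$, so all buyers' actions are wrong exactly on the event $\{Y\neq\theta\}$, which has probability $1-r$.

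The main point needing care is the neutrality step: one must check that conditioning on which arrivals bought, and on the waiting times, does not shift the likelihood ratio. This is exactly the last claim of Lemma~\ref{lem:access}, applied inductively. The strict inequality $\beta(c)<h$ is the other essential ingredient; it is immediate because $c>0$ and $\log\frac{r-c}{1-r+c}$ is decreasing in $c$ with value $h$ at $c=0$.
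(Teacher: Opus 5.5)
Your proof is correct and follows essentially the same route as the paper's. Both derive the founder value and cutoff, use $b(c)\le\beta(c)<h$ to show that every buyer copies $Y$, invoke the neutrality of entry from Lemma~\ref{lem:access} to get $L_n=hY$ by induction, and finish with the constant positive entry probability $D(b(c))$ and the event $\{Y\neq\theta\}$. The only difference is cosmetic: the paper derives \eqref{eq:foundervalue} directly by comparing $|x|$ with $h$, whereas you take it from the displayed formula or from \eqref{eq:generalvalue}, which is equally valid.
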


Neither a preference for conformity nor a misunderstanding of the record is needed. The buyers correctly see that earlier customer decisions are copies. They buy because the founding information is still useful to them. Conversely, an individual whose signal could reverse the founder's recommendation has no reason to pay for it and contributes nothing. The access decision removes precisely the observations that could break the agreement.

Any positive fee with positive demand produces the same result, including a current-revenue-maximizing fee in this benchmark. That conclusion relies on the initial recommendation having a known strength. With a richer archive, a buyer can find the expected experiment worth purchasing yet encounter a weak realization. We return to this distinction in Section~\ref{sec:richer}. First, we examine a way to preserve the information of the buyers themselves.

\section{Retaining an independent judgment}\label{sec:preliminary}

\subsection{A consequential preliminary choice}

A person who ends up following the archive need not have agreed with it beforehand. That preliminary preference is valuable to successors because it reflects information acquired independently of the history. Requiring a report before access, however, is insufficient if the report has no consequence for the person making it.

We therefore modify the decision rule as follows. At size $n$, the provider announces a fee $c_n$ and a probability $0<\varepsilon_n\leq\bar\varepsilon<1$. The probability schedule is a prescribed function of public size, so its announcement reveals no archive content. After observing $X$, a buyer pays and commits to the rule. Before access the buyer submits an irrevocable preliminary choice $B$. The buyer then reads a frozen snapshot of $\HH_n$ and submits a revised choice $A$. An independent lottery implements $B$ with probability $\varepsilon_n$ and $A$ otherwise. Both choices enter the archive, identified as preliminary and revised, after the second choice has been fixed.

The snapshot, fee and lottery probability do not depend on $B$. The buyer cannot withdraw selectively after access or replace the action selected by the lottery. These restrictions ensure that the preliminary choice changes only the chance of making that decision. It cannot improve the information subsequently purchased or alter the buyer's other payoffs. The mechanism requires enforceable choices; it does not apply directly to a freely retractable comment.

All observations are collected before the state is resolved or action payoffs are revealed. More precisely, consider finite economies with $N$ completed member records, followed by the implementation lotteries and settlement. There are no delay costs, and the decisions of a fixed initial sequence of members do not depend on $N$. The learning limit is the limit of these consistent pre-settlement records as $N$ grows. Public outcome revelation is therefore not the source of the learning result.\footnote{This timing is natural for decisions collected before an event or a common decision deadline. Immediate feedback about the same state would change the information available to subsequent individuals. The mechanism does not require the provider to know the state when it collects the choices or selects which one to implement.}

\Needspace{10\baselineskip}
\begin{proposition}\label{prop:elicitation}
Under the preliminary-choice rule, optimal choices are, almost surely,
\begin{equation}
 B=\operatorname{sign}X,\qquad A=\operatorname{sign}(X+L_n).
 \label{eq:twodecisions}
\end{equation}
The preliminary choice is strict whenever $X\ne0$. The gross value of access is $(1-\varepsilon_n)V_n(x)$. Consequently, if
\begin{equation}
 p_n=\frac{c_n}{1-\varepsilon_n}
 \label{eq:effectiveprice}
\end{equation}
is the effective fee, entry occurs when $V_n(X)\geq p_n$. It is a symmetric cutoff decision and remains uninformative about the state.
\end{proposition}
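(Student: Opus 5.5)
We argue by backward induction within a single buyer's problem, taking the fee $c_n$ and the probability $\varepsilon_n$ as given. The key structural fact is that the buyer's expected payoff after paying separates into two terms. Conditional on $X=x$, $B=b$ and the revised rule, it equals
\[
\varepsilon_n\,\Pp(\theta=b\mid X=x)+(1-\varepsilon_n)\,\E\bigl[\Pp(\theta=A\mid X=x,\HH_n)\bigr].
\]
The lottery is independent, and the snapshot, fee and probability do not depend on $B$. Hence $B$ enters only the first term, and the information available when choosing $A$ does not depend on $b$. The second term is therefore unaffected by $b$ once $A$ is chosen optimally.

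\textbf{Revised choice.} At the revised stage the buyer knows $x$ and $\HH_n$. Because signals are conditionally independent of the archive and the archive's log odds are $L_n$, the posterior log odds are $x+L_n$. The buyer therefore picks $A=\operatorname{sign}(x+L_n)$. Ties occur only on a null event, since $X$ has a density and is independent of $L_n$ given the state.

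\textbf{Preliminary choice.} In the first term, $\Pp(\theta=+1\mid X=x)=\sig(x)$. This exceeds one half exactly when $x>0$, so $B=\operatorname{sign}X$ is the strictly optimal choice whenever $X\neq0$, which holds almost surely. This establishes \eqref{eq:twodecisions}.

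\textbf{Value of access.} Next we compute the gross value of access. Substituting the optimal choices, the payoff of a buyer is
\[
\varepsilon_n\sig(|x|)+(1-\varepsilon_n)\bigl[\sig(|x|)+V_n(x)\bigr],
\]
where $V_n$ is exactly the value of reading the archive defined before Lemma~\ref{lem:access}, because the revised stage is the original post-access problem. A nonbuyer obtains $\sig(|x|)$. The gross value is therefore $(1-\varepsilon_n)V_n(x)$, and the buyer enters iff $(1-\varepsilon_n)V_n(X)\ge c_n$. Since $\varepsilon_n\le\bar\varepsilon<1$, this is equivalent to $V_n(X)\ge p_n$ with $p_n$ as in \eqref{eq:effectiveprice}.

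\textbf{Symmetric cutoff.} We then invoke Lemma~\ref{lem:access} with $p_n$ in place of $c_n$. One caveat must be checked first: the archive now contains preliminary and revised choices, so we need the archive experiment to remain symmetric before applying the lemma. We verify this by induction on $n$. The initial experiment is symmetric. Entry is a symmetric cutoff in $|X|$ by the inductive hypothesis. The map from $(X,L_n)$ to $(B,A)$ is odd. By \eqref{eq:signals}, reflecting every direction therefore swaps the state-conditional likelihoods of the extended record.

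Given symmetry, Lemma~\ref{lem:access} gives the remaining claims. $V_n$ is even, continuous and decreasing where positive, so entry is an interval $|X|\le b_n$ up to null ties, and entry is uninformative about the state even conditional on the archive, because $|X|$ has the same law in both states and is independent of the archive given $\theta$.

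\textbf{Main obstacle.} We expect the main difficulty to be this symmetry induction, together with the separability argument. The separability step requires showing rigorously that committing to $B$ cannot affect the revised-stage information or payoff. That is where the stated restrictions (frozen snapshot, no selective withdrawal, independent lottery, recording only after $A$ is fixed) are used.
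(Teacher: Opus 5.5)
Your proposal is correct and follows essentially the paper's route: separate the $\varepsilon_n$-weighted preliminary payoff from the $(1-\varepsilon_n)$-weighted revised problem, read off $B=\operatorname{sign}X$ and $A=\operatorname{sign}(X+L_n)$, take expectations to obtain the gross value $(1-\varepsilon_n)V_n(x)$, and invoke Lemma~\ref{lem:access} for the cutoff and neutrality conclusions. Your explicit induction showing that the record with both preliminary and revised choices stays symmetric, via the oddness of $(X,L_n)\mapsto(B,A)$ and $f_+(x)=f_-(-x)$, spells out a step the paper leaves implicit when it simply cites the lemma.
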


To see the incentive, suppose $x>0$. Reporting the positive preliminary choice rather than the negative one raises expected payoff by
\begin{equation}
 \varepsilon_n[2\sig(x)-1]=\varepsilon_n\tanh(x/2)>0.
 \label{eq:incentivegap}
\end{equation}
The same argument, with signs reversed, applies to $x<0$. At $\varepsilon_n=0$, the preliminary message has no payoff consequence: truthful and uninformative reporting are both compatible with individual optimization. Positive implementation probability makes its direction strict. The revised decision is optimal after combining the private signal with the archive. Before access, the buyer's expected gross payoff is therefore
\begin{equation}
 \varepsilon_n\sig(|x|)+(1-\varepsilon_n)[\sig(|x|)+V_n(x)]
 =\sig(|x|)+(1-\varepsilon_n)V_n(x).
 \label{eq:payoff}
\end{equation}
The reduction in willingness to pay follows by comparing this with the nonbuyer's payoff $\sig(|x|)$.

The rule elicits a binary direction rather than a full posterior belief. This is enough for the learning result. Define the accuracy of the preliminary choice among buyers at cutoff $b$ by
\begin{equation}
 q_b=\Pp_+(X>0\mid |X|\leq b)
     =\E[\sig(T)\mid T\leq b]>1/2.
 \label{eq:selectedaccuracy}
\end{equation}
The equality follows from \eqref{eq:signals}: conditional on $T=t$ and either state, the sign of the signal is correct with probability $\sig(t)$. Selection retains individuals with weaker information, but their directions still favor the true state. Also, $q_b$ is nondecreasing in $b$.

\subsection{Entry as the record improves}\label{sec:pricing}

It is not enough that every preliminary choice be informative at the date it is recorded. If the cutoff became arbitrarily small, successive reports could become arbitrarily close to fair coin tosses. We show that this problem does not arise under two simple pricing policies.

The first policy keeps the effective fee fixed at $p\in(0,V_0(0))$, so the actual fee is $c_n=(1-\varepsilon_n)p$. Because the archive retains its earlier observations, a later buyer can always ignore the additional data. Thus the value of access cannot decrease:
\begin{equation}
 V_{n+1}(x)\geq V_n(x)\geq V_0(x).
 \label{eq:refinement}
\end{equation}
It follows that $b_{n+1}\geq b_n\geq b_0>0$. This comparison is between the experiments offered at successive sizes. Realized confidence can fall when a particular new observation contradicts the existing record.

The second policy chooses the pooled fee to maximize current expected revenue per arriving individual. Define
\begin{equation}
 R_n(p)=p\Pp(V_n(X)\geq p),\quad
 R_n^*=\max_{p\in[0,1/2]}R_n(p),\quad
 p_n^*\in\arg\max_{p\in[0,1/2]}R_n(p).
 \label{eq:revenue}
\end{equation}
Select a maximizer by a fixed deterministic rule. Proposition~\ref{prop:elicitation} implies that actual revenue is $(1-\varepsilon_n)R_n(p)$, so the provider posts $c_n^*=(1-\varepsilon_n)p_n^*$. Assume $V_0(0)>0$, ensuring positive initial revenue. Full retention gives $R_n^*\geq R_0^*>0$. Since demand is at most one and the effective fee is at most $1/2$,
\begin{equation}
 p_n^*\geq R_0^*,\qquad D(b_n^*)\geq2R_0^*>0.
 \label{eq:positivebounds}
\end{equation}
The provider cannot maximize a revenue bounded away from zero while admitting a vanishing share of buyers at a bounded price. Hence the cutoff is bounded away from zero under this policy as well.

The provider in this comparison maximizes current revenue using the distribution of archives at the publicly known size. The pooled restriction prevents price from revealing private archive contents. Allowing such signaling, or accounting for the effect of today's price on future profits, would require a different pricing problem.\footnote{The same distinction applies to \eqref{eq:refinement}. An archive that physically retains earlier data is a more informative experiment for a buyer with access. It does not follow that an arbitrary public statistic of that archive, or a privately chosen posted price, preserves the same pre-purchase information problem.}

\subsection{Learning from the selected buyers}

\begin{theorem}\label{thm:learning}
Suppose that the symmetric initial archive has finite log odds almost surely and $V_0(0)>0$. Under the preliminary-choice rule, full retention, and either pricing policy in Section~\ref{sec:pricing}, there is a deterministic $b_{\min}>0$ such that $b_n\geq b_{\min}$ at every size. There are infinitely many members, and
\begin{equation}
 \Pp(\theta=+1\mid\HH_n)\longrightarrow\ind\{\theta=+1\}
 \quad\text{almost surely}.
 \label{eq:learning}
\end{equation}
If $q_{\min}=q_{b_{\min}}$, the ex ante Bayes error of the archive after $n$ member records is at most
\begin{equation}
 \exp\{-2n(q_{\min}-1/2)^2\}.
 \label{eq:errorbound}
\end{equation}
Members' revised-choice error converges to zero. These conclusions hold with bounded as well as unbounded private beliefs.
\end{theorem}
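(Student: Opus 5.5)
The plan is to discard the revised choices and work only with the sub-record of preliminary choices $B_1,B_2,\dots$. These are conditionally independent signals with accuracy bounded away from $1/2$. Since the full archive $\HH_n$ contains them, any bound on error or learning for the sub-record transfers to $\HH_n$.

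\emph{Step 1 (uniform cutoff).} Under the fixed effective fee $p\in(0,V_0(0))$, Lemma~\ref{lem:access} gives a cutoff $b_0>0$ for $V_0$, since $V_0$ is continuous with $V_0(0)>p$. By \eqref{eq:refinement}, $b_n\geq b_0$ for every $n$, so set $b_{\min}=b_0$. Under pooled myopic pricing, \eqref{eq:positivebounds} gives $D(b_n^*)\geq 2R_0^*>0$, so $b_n^*\geq b_{\min}:=\inf\{b:D(b)\geq 2R_0^*\}>0$. This infimum is positive because $D$ is continuous with $D(0)=0$.

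In both cases $b_n$ is a deterministic function of public size. The effective fee and $\varepsilon_n$ are prescribed by size, and the revenue maximizer is chosen by a fixed deterministic rule from the distribution $\nu_{\theta,n}$, not from the realized archive. Each arrival then buys with probability $D(b_n)\geq D(b_{\min})>0$, and $D(b_{\min})>0$ because the densities are positive near $0$. Arrivals are i.i.d., so waiting times between members are dominated by geometric variables and there are infinitely many members almost surely.

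\emph{Step 2 (conditional independence of the preliminary choices).} This is the step I expect to need the most care. The $k$-th member's signal is the first arrival after the $(k-1)$-th member whose $|X|\leq b_{k-1}$, where $b_{k-1}$ is deterministic. Signals are i.i.d. given $\theta$ and independent of $\HH_0$. By Proposition~\ref{prop:elicitation} the member reports $B_k=\operatorname{sign}X_k$. Because the selection event depends only on $|X|$, through thresholds fixed in advance, and because skipped arrivals are discarded, the following holds by a strong-Markov/optional-skipping argument. Conditional on $\theta$, the variables $B_1,B_2,\dots$ are independent of each other and of $\HH_0$, with $\Pp(B_k=\theta\mid\theta)=q_{b_{k-1}}\geq q_{\min}>1/2$, using \eqref{eq:selectedaccuracy} and the monotonicity of $q_b$. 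The point to check is that the revised choices and the lotteries use only $X_k$, $\HH_{k-1}$ and independent randomization. They therefore do not affect which future arrivals are selected; only $|X|$ and size do.

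\emph{Step 3 (error bound and learning).} With a uniform prior, the Bayes error of $\HH_n$ is at most that of any test based on $(B_1,\dots,B_n)$. Take the majority test. Given $\theta$, let $Z_k=\ind\{B_k=\theta\}$. These are independent with means at least $q_{\min}$, and an error requires $\sum Z_k\leq n/2$, a deviation of at least $n(q_{\min}-1/2)$ below the mean. Hoeffding's inequality gives the conditional bound $\exp\{-2n(q_{\min}-1/2)^2\}$ in each state, and averaging over states gives \eqref{eq:errorbound}.

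For almost-sure learning, $\Pp(\theta=+1\mid\HH_n)$ is a bounded martingale converging to $\Pp(\theta=+1\mid\HH_\infty)$. Kolmogorov's strong law for independent bounded variables gives $\liminf n^{-1}\sum_{k\leq n}\theta B_k\geq 2q_{\min}-1>0$ almost surely. Hence $\theta=\operatorname{sign}\lim\inf$ of the running average $n^{-1}\sum B_k$ is $\HH_\infty$-measurable, and the limit equals $\ind\{\theta=+1\}$ (Lévy's 0--1 law).

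Finally, consider a member at size $n$. Entry depends only on $|X|$ and is uninformative by Proposition~\ref{prop:elicitation}, and $X$ is conditionally independent of $\HH_n$. The revised choice $\operatorname{sign}(X+L_n)$ is optimal given $(X,\HH_n)$, so its error is at most that of $\operatorname{sign}L_n$. That error equals the archive's Bayes error, which tends to zero by \eqref{eq:errorbound}. Nothing in the argument used $\bar x=\infty$: bounded beliefs enter only through $q_{\min}>1/2$, which holds for any positive cutoff.
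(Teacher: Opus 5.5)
Your proposal is correct and follows essentially the same route as the paper. It obtains a deterministic cutoff floor from refinement or the revenue floor, treats the selected preliminary signs as conditionally independent with accuracy at least $q_{\min}$, applies Hoeffding's bound to the majority test, and bounds the revised-choice error by the archive's Bayes action. The only variation is in the almost-sure step: you use the strong law and L\'evy's 0--1 law to show that $\theta$ is $\HH_\infty$-measurable, whereas the paper infers $\pi_\infty\in\{0,1\}$ from $\E[\min(\pi_n,1-\pi_n)]\to0$ and then applies calibration; both arguments are valid.
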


The central observation is that the preliminary choices remain independent conditional on the state. Entry depends on signal strength, but its cutoff is fixed by the publicly known size. Consequently, the next selected sign is a new draw whose accuracy is at least $q_{\min}>1/2$. Even a reader who ignores every revised choice and simply counts the preliminary directions can eventually identify the state. The full Bayesian record performs at least as well. The proof in Appendix~\ref{app:learning} supplies the probability bound and the almost-sure statement.

The two choices of one person, however, are generated by the same signal and must be interpreted jointly. For a realized archive log odds $l$ and cutoff $b$, a pair $(u,v)$ identifies the signal interval
\begin{equation}
 C_{u,v}(l,b)=[-b,b]\cap\{x:\operatorname{sign}x=u\}
                         \cap\{x:\operatorname{sign}(x+l)=v\}.
 \label{eq:paircell}
\end{equation}
The record's likelihood increment is
\begin{equation}
 \log\frac{\int_{C_{u,v}(l,b)}f_+(x)\,dx}
                {\int_{C_{u,v}(l,b)}f_-(x)\,dx},
 \label{eq:pairupdate}
\end{equation}
on a positive-probability branch, with integrals restricted to the signal support. This joint likelihood is what successors use. When $|l|\geq b$, the revised choice is a copy, but the preliminary choice still contributes the nonzero increment $u\log[q_b/(1-q_b)]$.

Theorem~\ref{thm:learning} therefore makes a distinction within the same population. Buyers are selected for weak private confidence under both recording rules. Under the first rule, this selection eventually prevents their actions from correcting the record. Under the second, their preliminary directions keep adding evidence. Expanding contributor eligibility to people who do not buy is one possible remedy, considered below, but it is not necessary for this result.

\subsection{The cost of making a judgment consequential}\label{sec:cost}

Let $e_{A,n}$ be the error probability of the revised choice among members at size $n$, and let $e_{{\rm imp},n}$ be the error of the choice actually implemented. Write $q_{\rm all}=\E[\sig(T)]<1$ for private-signal accuracy in the whole population. Then
\begin{equation}
 e_{{\rm imp},n}=\varepsilon_n(1-q_{b_n})+(1-\varepsilon_n)e_{A,n}.
 \label{eq:implementederror}
\end{equation}
The first term persists if the preliminary choice keeps a fixed positive chance of being selected.

\Needspace{10\baselineskip}
\begin{proposition}\label{prop:cost}
Under the conditions of Theorem~\ref{thm:learning}, a fixed $\varepsilon>0$ gives
\begin{equation}
 \liminf_n e_{{\rm imp},n}\geq\varepsilon(1-q_{\rm all})>0.
 \label{eq:errorfloor}
\end{equation}
If $\varepsilon_n\to0$ while remaining strictly positive, member implemented-choice error converges to zero. At size $n$, the expected accuracy sacrificed by the lottery, relative to implementing the revised choice in the same information environment, is
\begin{equation}
 C_n=\varepsilon_n\E[V_n(X)\mid\text{entry}]\leq\varepsilon_n/2
 \label{eq:cost}
\end{equation}
per member. At a given archive experiment, optimal current fee revenue is reduced by the factor $1-\varepsilon_n$ relative to wholly revised decisions.
\end{proposition}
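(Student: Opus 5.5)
The argument rests on the decomposition \eqref{eq:implementededor} of the implemented error together with two monotonicity facts established earlier. For the floor \eqref{eq:errorfloor}, drop the nonnegative term $(1-\varepsilon)e_{A,n}$ and bound $1-q_{b_n}$ from below. The selected accuracy $q_b=\E[\sig(T)\mid T\leq b]$ is nondecreasing in $b$, and as $b\to\bar x$ it equals the unconditional mean $q_{\rm all}$. Conditioning on $T\leq b$ discards only higher values of $\sig(T)$, so $q_b\leq q_{\rm all}$ for every cutoff. This holds whether or not the cutoff is capped at $\bar x$. Hence $e_{{\rm imp},n}\geq\varepsilon(1-q_{\rm all})$ at every $n$. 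Finally, $q_{\rm all}<1$ because $\sig(T)<1$ whenever $T$ is finite, and $T$ is finite almost surely since $X$ has a density.

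For the vanishing case, use $1-q_{b_n}\leq 1/2$ together with Theorem~\ref{thm:learning}, which gives $e_{A,n}\to0$. Then $e_{{\rm imp},n}\leq\varepsilon_n/2+e_{A,n}\to0$.

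One point needs checking: Theorem~\ref{thm:learning} is stated for $\varepsilon_n\leq\bar\varepsilon<1$ with $\varepsilon_n>0$. Its proof uses only strict positivity, which makes the preliminary choices informative, and the uniform cutoff bound $b_{\min}$. Neither depends on $\varepsilon_n$ staying fixed, so a sequence $\varepsilon_n\to0$ is covered. I expect this verification, that the learning theorem's hypotheses survive a vanishing implementation probability, to be the only delicate step. Under a fixed effective fee the cutoffs do not depend on $\varepsilon_n$ at all. Under pooled myopic pricing the problem is posed in effective-fee terms through $R_n(p)$, so $p_n^*$ and $b_n^*$ are again independent of $\varepsilon_n$.

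For the cost formula \eqref{eq:cost}, compare the buyer's gross payoff \eqref{eq:payoff} with the payoff from implementing $A$ for sure. The latter is $\sig(|x|)+V_n(x)$, because $A$ is the optimal posterior action by Proposition~\ref{prop:elicitation}. The difference is $\varepsilon_nV_n(x)$. Averaging over $X$ conditional on entry gives $C_n$. Entry is uninformative about the state by Lemma~\ref{lem:access}, so this conditional expectation is well defined across states. The bound $C_n\leq\varepsilon_n/2$ follows from \eqref{eq:perfectbound}: $V_n(x)\leq1-\sig(|x|)\leq1/2$.

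The revenue statement is Proposition~\ref{prop:elicitation} restated. At a fixed archive experiment, demand depends only on the effective fee $p$. Revenue at effective fee $p$ is therefore $(1-\varepsilon_n)p\,\Pp(V_n(X)\geq p)=(1-\varepsilon_n)R_n(p)$. Maximizing over $p$ gives $(1-\varepsilon_n)R_n^*$, compared with $R_n^*$ when the decision is wholly revised ($\varepsilon_n=0$). This is exactly the claimed reduction factor $1-\varepsilon_n$.
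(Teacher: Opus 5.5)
Your proof is correct and follows the paper's argument. You use the exact decomposition \eqref{eq:implementederror} with $q_{b_n}\leq q_{\rm all}<1$ for the floor and $e_{A,n}\to0$ from Theorem~\ref{thm:learning} for the vanishing case; the gap $\varepsilon_nV_n(x)$ between \eqref{eq:payoff} and sure implementation of $A$, averaged over entrants and bounded via \eqref{eq:perfectbound}, for the cost; and the factor $(1-\varepsilon_n)R_n(p)$ for revenue. Your floor holds at every $n$ without needing $e_{A,n}\to0$, and the step you flag as delicate is automatic, because Theorem~\ref{thm:learning} already covers any schedule with $0<\varepsilon_n\leq\bar\varepsilon<1$.
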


The opportunity cost in \eqref{eq:cost} is distinct from the fee transfer. No additional elicitation payment is part of the provider's budget, although operating an enforceable rule may itself be costly outside the model. The comparison holds the information environment fixed: it does not assume that deleting preliminary reports from the equilibrium would leave subsequent choices unchanged.

For fixed effective prices, changing the strictly positive lottery probabilities does not change either optimal reporting or the entry set at any given archive experiment. By induction, the distribution of records is the same for any such probability schedule when actual fees are scaled accordingly. This explains why vanishing stakes can remove the limiting implementation loss without stopping learning in the ideal model. Their incentive interpretation needs care, however. Equation~\eqref{eq:incentivegap} becomes arbitrarily small as $\varepsilon_n$ decreases, and any additional motive to misreport can then matter.\footnote{A strictly positive summable schedule makes the ideal limit especially clear. If $\varepsilon_n=\eta 2^{-n-1}$ for $0<\eta<1$, total expected opportunity cost across members is at most $\eta/2$, and only finitely many preliminary choices are selected almost surely. Each remains strictly optimal when reported. This observation is a consequence of exact expected-payoff maximization, not a claim of robustness to reporting noise or expressive preferences.}

Also, correct revised decisions among members do not imply correct decisions throughout the population. Nonbuyers still act on their private information. With unbounded signals and a fixed positive effective fee, some continue to remain outside even when the archive becomes very accurate.

\subsection{A worked comparison}\label{sec:example}

Consider bounded signals on $[-1,1]$ with densities
\begin{equation}
 f_+(x)=\frac{e^{x/2}}{4\sinh(1/2)},\qquad f_-(x)=f_+(-x).
 \label{eq:exampledensity}
\end{equation}
These satisfy \eqref{eq:signals}. The entry probability and selected preliminary accuracy are
\begin{equation}
 D(b)=\frac{\sinh(b/2)}{\sinh(1/2)},\qquad q_b=\sig(b/2).
 \label{eq:exampleclosed}
\end{equation}
Let the founder's accuracy be $r=3/4$. Compare the post-access-only rule at fee $p=0.10$ with the preliminary-choice rule at effective fee $p$, implementation probability $\varepsilon=0.05$, and actual fee $c=0.095$.

Both rules begin with cutoff $b_0=\log(13/7)$, entry probability $D(b_0)=0.603509$, and preliminary accuracy $q_{b_0}=0.576768$. Under the first rule all buyers copy the founder, so the archive and outcomes remain fixed. Under the second rule some initially implement their less accurate preliminary choices and pay a smaller fee. The immediate comparison is therefore unfavorable to the new rule.

As preliminary observations accumulate, the archive improves and the cutoff expands. In the limit of a perfectly informative archive, even the strongest private type values access at $1-\sig(1)=0.268941>p$. Thus all types eventually buy under the second rule. Limiting fee revenue is $0.095$ per arriving individual, and implemented accuracy converges to
\begin{equation}
 1-\varepsilon[1-\sig(1/2)]=0.981123.
 \label{eq:examplelimit}
\end{equation}

\begin{table}[htbp]
\centering\small
\caption{Expected outcomes per arriving individual}
\begin{tabular}{lrrr}
\toprule
Recording rule and archive size & Entry & Fee revenue & Decision accuracy\\
\midrule
Post-access only, every size & 0.603509 & 0.060351 & 0.727006\\
Preliminary and revised, $n=0$ & 0.603509 & 0.057333 & 0.721779\\
Preliminary and revised, $n=4$ & 0.715661 & 0.067988 & 0.738296\\
Preliminary and revised, $n=12$ & 1.000000 & 0.095000 & 0.819566\\
Preliminary and revised, $n\to\infty$ & 1.000000 & 0.095000 & 0.981123\\
\bottomrule
\end{tabular}
\label{tab:comparison}
\end{table}

Table~\ref{tab:comparison} separates finite records from the analytical limit. The finite calculations enumerate every possible history using the joint update \eqref{eq:pairupdate}; the limiting row follows from Theorem~\ref{thm:learning}. Decision accuracy includes nonbuyers and refers to the action implemented, not just the revised recommendation. Appendix~\ref{app:example} gives the calculation and a sufficient bound on $\varepsilon$ for the long-run gains in revenue and accuracy to hold simultaneously. The inequalities in this example are strict, so the comparison persists under small parameter changes.

The example identifies an economic reason to preserve preliminary choices. Whether the provider would adopt the rule depends on how it values the later revenue gains against the initial losses. A short horizon or strong discounting could make the initial sacrifice unattractive even though the record eventually becomes more informative.

\section{Richer histories and the coverage of the record}\label{sec:richer}

The one-founder benchmark combines two properties: the record fails to reveal the state, and every new buyer immediately copies the founding recommendation. These properties need not coincide. We first show why the information disclosed before purchase matters for copying. We then establish the broader nonlearning result and examine the consequence of occasionally observing people outside the paying group.

\subsection{What buyers know about archive strength}\label{sec:confidence}

Suppose the symmetric founding experiment has log odds $L_0$, and its magnitude $H=|L_0|$ is disclosed before purchase while its direction remains hidden. The magnitude is state-neutral. Conditional on $H=h>0$, the remaining experiment has accuracy $\sig(h)$, so
\begin{equation}
 V_h(x)=[\sig(h)-\sig(|x|)]_+.
 \label{eq:disclosedvalue}
\end{equation}
At every positive fee with positive demand, all admitted signals satisfy $|X|<h$. Thus, when only post-access actions are retained, every added action agrees with the initial Bayesian recommendation and $L_n=L_0$ for all $n$. This reasoning applies to several founders as well as one. If the founding vote is tied, $h=0$ and there is no positive-price demand. The original founding entries themselves need not be unanimous.

Now keep the realized confidence hidden until access, as in the main model. Take three independent founding recommendations, each correct with probability $3/4$, and put $h=\log3$. The possible initial log odds are $-3h,-h,h,3h$, with conditional probabilities
\begin{equation}
 64\nu_{+,0}=(1,9,27,27),\qquad
 64\nu_{-,0}=(27,27,9,1)
 \label{eq:threefounders}
\end{equation}
in that order. Consider an individual with $x=2h=\log9$. The private decision is correct with probability $9/10$. Reading the archive changes that decision only if $L_0=-3h$, giving access value
\begin{equation}
 V_0(2h)=\frac1{10}\frac{27}{64}-\frac9{10}\frac1{64}
        =\frac9{320}=0.028125.
 \label{eq:counterexample}
\end{equation}
At fee $c=0.02$, the individual strictly buys. But if the realized archive is $L_0=-h$, the combined log odds are $h>0$: the buyer chooses the positive action, contradicting the negative recommendation. Both inequalities are strict and hold for an interval of signals, so this occurs with positive probability whenever the private signal support includes that interval.

The buyer purchases an experiment before knowing which realization will be supplied. Its strong realizations make it worth buying, even though a weak realization can be outweighed by private information. Substituting the realized confidence into the one-founder cutoff before the purchase decision would miss this effect.

\subsection{A positive price bounds what final choices reveal}\label{sec:nonlearning}

For a realized archive log odds $l$ and buyer cutoff $b$, a member chooses the positive action if $X\geq-l$. Conditional on entry, its probability in state $\theta$ is
\begin{equation}
 Q_\theta(l,b)=\frac{\int_{-l}^{b} f_\theta(x)\,dx}{D(b)}
 \quad\text{if }|l|<b,
 \label{eq:actionprob}
\end{equation}
with $Q_\theta=1$ if $l\geq b$ and $Q_\theta=0$ if $l\leq-b$. A positive action adds $\log(Q_+/Q_-)$ to the archive's log odds; a negative action adds $\log[(1-Q_+)/(1-Q_-)]$. Within $|l|<b$, both actions are informative because $Q_+>Q_-$. Outside that region, all admitted signals imply the same action.

The value bound in Lemma~\ref{lem:access} now has a useful implication. If fees satisfy $c_n\geq\underline c>0$, every buyer must have
\begin{equation}
 |X|\leq \bar b=\log\frac{1-\underline c}{\underline c},
 \label{eq:cap}
\end{equation}
where $\underline c<1/2$; larger fees produce no demand. Even perfect information would not justify paying the fee for a stronger private signal.

\Needspace{10\baselineskip}
\begin{theorem}\label{thm:nonlearning}
Suppose only buyers' post-access actions are retained, the symmetric initial record has finite $L_0$ almost surely, and pooled fees satisfy $c_n\geq\underline c>0$. Under the maintained information policy,
\begin{equation}
 |L_n|\leq\max\{|L_0|,2\bar b\}\quad\text{for every }n,
 \label{eq:globalcap}
\end{equation}
with $\bar b$ given by \eqref{eq:cap}. The archive posterior has an interior limit and does not completely reveal the state. If $|L_0|\leq K_0$ surely and $K=\max\{K_0,2\bar b\}$, archive-only Bayes error is at least $\sig(-K)>0$ at every size and in the limit.
\end{theorem}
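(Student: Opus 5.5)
The plan is an induction on $n$ that tracks one member record at a time. It rests on two facts from earlier results. First, by Lemma~\ref{lem:access}, entry and nonentry carry no directional information, even conditional on the realized archive. So the only update at each step comes from the appended post-access action, and waiting times or counts of nonbuyers can be ignored. Second, the value bound \eqref{eq:perfectbound} gives $V_n(x)\le 1-\sig(|x|)$. A type pays $c_n\ge\underline c$ only if $1-\sig(|x|)\ge\underline c$, that is $|x|\le\bar b$ as in \eqref{eq:cap}. Hence every admitted cutoff satisfies $b_n\le\bar b$, up to null ties; if demand is zero there is no new record and nothing to prove. This holds for any pooled fee sequence, because the bound does not depend on the archive experiment.

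The core step is a one-step bound. Fix a realized archive log odds $l$ and a cutoff $b\le\bar b$. If $|l|\ge b$, every admitted signal produces the same action, so $Q_+=Q_-$ (both $0$ or both $1$) and $L_{n+1}=l$. If $|l|<b$, a positive action reveals $X\in[-l,b]$. The new log odds are then
\[
l'=l+\log\frac{\int_{-l}^{b}f_+(x)\,dx}{\int_{-l}^{b}f_-(x)\,dx},
\]
with integrals restricted to the support. By \eqref{eq:signals}, $f_+=e^xf_-$, so the likelihood ratio of the integrals is a weighted average of $e^x$ over $x\in[-l,b]$. It therefore lies in $[e^{-l},e^{b}]$, which gives $l'\in[0,l+b]\subset[0,2b)$. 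Symmetrically, a negative action reveals $X\in[-b,-l)$, and then $l'\in[l-b,0]\subset(-2b,0]$. In every case $|L_{n+1}|\le\max\{|L_n|,2\bar b\}$. Induction from $n=0$ then yields \eqref{eq:globalcap}. This averaging argument for the likelihood ratio on a signal interval is the main point to get right. It includes checking that a positive-probability branch with $|l|<b$ has nonnull cells on both sides, which holds because the densities are positive on the support.

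For the limit statements, the posterior $\Pp(\theta=+1\mid\HH_n)$ is a bounded martingale in the filtration generated by the archive (together with public sizes and fees, which carry no content). It therefore converges almost surely. Since $|L_n|$ is bounded almost surely by $\max\{|L_0|,2\bar b\}$, which is finite, the limiting log odds obey the same bound, so the limit is interior and the state is not completely revealed. If only finitely many buyers ever arrive, the archive is eventually constant and the same conclusion holds trivially. Finally, when $|L_0|\le K_0$ surely, archive-only Bayes error at size $n$ equals $\E[\min\{\sig(L_n),\sig(-L_n)\}]=\E[\sig(-|L_n|)]$, which is at least $\sig(-K)$ because $|L_n|\le K$. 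Passing to the almost-sure limit, via bounded convergence or directly from the bound on the limiting log odds, gives the same bound for the limit.
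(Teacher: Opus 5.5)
Your proposal is correct and follows essentially the paper's proof: the signal cap $\bar b$ from \eqref{eq:perfectbound}, each action's likelihood ratio written as an average of $e^x$ over a subset of admitted signals, exact copying once $|L_n|$ reaches the cutoff, induction, and then the bounded-martingale and $\sigma(-K)$ arguments. The only difference is cosmetic. You use the exact action interval to place the updated log odds on the side of the recorded action, within $[0,2b)$ or $(-2b,0]$, whereas the paper uses the cruder increment bound $b_n\le\bar b$ applied when $|L_n|<\bar b$; both give the same cap $\max\{|L_0|,2\bar b\}$.
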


The argument is short. Each action is generated by a subset of admitted signals, so its likelihood ratio is an average of $e^x$ over a subset of $[-\bar b,\bar b]$. Its log-likelihood increment is therefore no larger than $\bar b$ in absolute value. Once $|L_n|\geq\bar b$, all possible buyers copy and no further information is added. Before reaching that region, a single update can take the archive at most to $2\bar b$. This proves the bound without requiring immediate copying, a constant realized precision or a constant purchase cutoff.

The result remains relevant under endogenous fees. For the pooled myopic rule applied to post-access-only records, take $\varepsilon_n=0$ in the current-revenue calculation \eqref{eq:revenue}. Full retention and neutral entry again yield
\begin{equation}
 c_n^*\geq R_n^*\geq R_0^*>0,
 \qquad D(b_n^*)\geq2R_0^*>0.
 \label{eq:baselinepricing}
\end{equation}
Thus the provider's myopic pricing supplies the uniform floor needed by Theorem~\ref{thm:nonlearning}. Merely assuming that each fee is positive would not suffice, because a positive sequence could converge to zero.

Under a fixed nonprohibitive fee, the cutoff is nondecreasing as the record improves, yet remains bounded by \eqref{eq:cap}. It converges to some $b_\infty>0$, while $L_n$ converges to a finite $L_\infty$ satisfying $|L_\infty|\geq b_\infty$. The probability that the next member contradicts the current recommendation tends to zero. This is an asymptotic statement: the argument does not require every history to enter an exact cascade at a finite date. Appendix~\ref{app:nonlearning} supplies the details.

The contrast with Theorem~\ref{thm:learning} can now be stated precisely. Positive prices bound the information of an individual buyer in both environments. This bound prevents complete learning from post-access actions, but does not prevent repeated independent preliminary directions from identifying the state. What matters is how the selected information is transformed before it is recorded.

\subsection{Occasional observations of nonbuyers}\label{sec:outsiders}

Consider a second change to the recording rule. Keep a fixed fee $0<c<V_0(0)$ and retain buyers' post-access actions. At chronological arrival $i$, an independent device observes a nonbuyer's actual private-signal action with probability $\rho_i\in[0,1]$. The record identifies the observation as a nonbuyer action and preserves it for later paying readers. The sampling schedule is deterministic and has no direct payoff effect. For this extension, let $Z_{i-1}$ be the public sequence of previous arrival labels: member, recorded nonbuyer, or unrecorded nonbuyer. Actions remain hidden until payment. At arrival $i$, access is valued conditional on $Z_{i-1}$, giving a cutoff $b_i(Z_{i-1})$. These labels preserve symmetry, but can change the distribution of archive strength and hence demand.\footnote{This is a technology for observing an action, not an assumption that an outsider sends a truthful costless report. The public labels identify the selection rule for each observation without revealing its direction. In this extension the cutoff can depend on the label history; no monotonicity in that history is required.}

For this comparison assume unbounded private beliefs. A price bounded away from zero leaves a positive tail of private signals outside at every history. Among nonbuyers at cutoff $b$, the private action is correct with probability
\begin{equation}
 q^{\rm out}(b)=\E[\sig(T)\mid T>b]\geq q_{\rm all}>1/2.
 \label{eq:outsideaccuracy}
\end{equation}
The excluded population is, in this sense, a particularly informative source of independent decisions.

\Needspace{8\baselineskip}
\begin{proposition}\label{prop:outsiders}
Under the sampling rule just described, full retention and unbounded private beliefs,
\begin{equation}
 \sum_i\rho_i=\infty
 \quad\Longrightarrow\quad
 \Pp(\theta=+1\mid\text{full record})\longrightarrow\ind\{\theta=+1\}
 \quad\text{almost surely}.
 \label{eq:outsiderlearning}
\end{equation}
If $\sum_i\rho_i<\infty$, only finitely many nonbuyer actions are recorded almost surely and complete learning fails.
\end{proposition}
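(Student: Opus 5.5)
Both directions rest on one structural fact: under this rule the recorded items split into member actions, which cannot reveal the state, and recorded nonbuyer actions, which are conditionally independent signals of bounded-away-from-half accuracy. Since $c$ is fixed with $0<c<V_0(0)$, the value bound \eqref{eq:perfectbound} gives a deterministic cap $b_i(Z_{i-1})\leq\bar b=\log[(1-c)/c]$ at every label history. Unbounded beliefs then give $D(\bar b)<1$.

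\emph{Properties of a recorded nonbuyer.} I would first show that the event ``arrival $i$ is a recorded nonbuyer'' is state-neutral given $Z_{i-1}$. This uses three ingredients:
\begin{itemize}
\item Lemma~\ref{lem:access} makes entry a symmetric cutoff event.
\item The sampling device is independent.
\item The cutoff depends only on the public labels, which do not reveal directions.
\end{itemize}
Conditional on being a recorded nonbuyer at cutoff $b$, the recorded action is $\operatorname{sign}X$ with $|X|>b$. By the argument behind \eqref{eq:selectedaccuracy}, it is correct with probability $q^{\rm out}(b)\geq q_{\rm all}>1/2$, and it is conditionally independent of the existing record.

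\emph{Sufficiency.} Given $\theta$, the label process is a sequence of events, each with probability at least $\rho_i[1-D(\bar b)]$ given the past. The conditional Borel--Cantelli lemma (L\'evy's extension) then yields infinitely many recorded nonbuyers almost surely whenever $\sum_i\rho_i=\infty$. I would then mirror the proof of Theorem~\ref{thm:learning}. A reader who keeps only the nonbuyer actions sees, in state $+1$, conditionally independent signs with accuracy at least $q_{\rm all}$, so the count statistic separates the states. The main obstacle is that the times at which these actions occur are random, though state-neutrally so. To handle this I would condition on the full label sequence $Z_\infty$, whose law is the same in both states because labels are state-neutral at every step. Given $Z_\infty$, the recorded outsider signs are independent with accuracies at least $q_{\rm all}$. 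A Hoeffding bound then gives a Bayes error at most $\exp\{-2k(q_{\rm all}-1/2)^2\}$ after $k$ of them. The full record is at least as informative, so its Bayes error tends to zero. Since the posterior is a bounded martingale, it converges almost surely to $\Pp(\theta=+1\mid\text{full record}_\infty)$. A vanishing Bayes error forces this limit to be $\ind\{\theta=+1\}$ almost surely.

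\emph{Failure of learning.} If $\sum_i\rho_i<\infty$, the first Borel--Cantelli lemma gives only finitely many recorded nonbuyer actions almost surely. Each of them has log-likelihood increment bounded by a finite quantity, because its cell is $\{|x|>b\}$ with a fixed sign. Member actions are bounded exactly as in Theorem~\ref{thm:nonlearning}: each increment is at most $\bar b$ in absolute value, and updating stops once $|L|\geq\bar b$. I would redo that argument starting after the last recorded outsider, from an almost surely finite log odds. This shows $|L_n|$ stays bounded almost surely, so the limit posterior is interior and complete learning fails. The subtlety here is that the log-odds bound is random rather than deterministic, but it is still finite almost surely, which suffices to rule out complete learning.
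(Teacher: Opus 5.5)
Your proof is correct. The failure half is essentially the paper's argument: first Borel--Cantelli, finite outsider increments, then the pathwise cap of Theorem~\ref{thm:nonlearning} run from the last recorded outsider.

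The sufficiency half takes a different route in two places.

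\emph{Supplying infinitely many outsiders.} The paper avoids conditional Borel--Cantelli by looking only at arrivals with $T>\bar b$, where $\delta=\Pp(T>\bar b)>0$. Such types never buy at any label history. The events that arrival $i$ lies in this tail and its sampling coin succeeds are therefore genuinely independent across arrivals, with probabilities $\delta\rho_i$, and the elementary second Borel--Cantelli lemma applies. Your L\'evy extension handles the endogenous cutoffs directly and is equally valid.

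\emph{Convergence.} The paper never estimates a Bayes error. Each recorded outsider moves the full-record log odds by exactly $\pm\log[q^{\rm out}(b_i)/(1-q^{\rm out}(b_i))]$, which is at least $\log[q_{\rm all}/(1-q_{\rm all})]$ in absolute value. On any path where the bounded martingale converged to an interior value, increments would have to vanish, contradicting infinitely many such jumps. Calibration then identifies the endpoint with the state. This is shorter and needs only one-step conditional likelihoods.

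Your Hoeffding route needs a stronger joint statement: given the label path, the outsider signs are independent with accuracies $q^{\rm out}(b_i)$. This holds because labels are functions of signal magnitudes and sampling coins alone. You should also make one step precise. The number $K_n$ of outsiders recorded by arrival $n$ is random, so the bound is really $\E[\exp\{-2K_n(q_{\rm all}-1/2)^2\}]$, which tends to zero by dominated convergence since $K_n\to\infty$ almost surely.

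In return, your route gives a quantitative error bound in the number of recorded outsiders that parallels \eqref{eq:errorbound}. The paper's pathwise argument delivers only the almost-sure limit.
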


Any constant positive observation probability is sufficient, as is a vanishing schedule such as $\rho_i=1/i$. The conclusion does not imply rapid learning when the sampling probability is small. It identifies a boundary: excluding nonbuyers entirely has a different asymptotic implication from observing a persistent, possibly sparse stream of their independent actions. Unbounded beliefs ensure a permanent source of excluded signals. With bounded beliefs, the outside population could disappear as information improves.

Advance access changes selection in another way. Selling eligibility before the private signal arrives keeps strong as well as weak signals in the contributing population. For a fixed experiment, the standard advance-selling comparison is between the expected information value $\E[V(X)]$ and spot revenue $\max_c c\Pp(V(X)\geq c)$. The former is larger when the positive information value is continuously distributed and nondegenerate; Appendix~\ref{app:advance} records the elementary argument. This is a useful contract-timing benchmark, but it addresses who becomes eligible rather than what is reported after exposure. In particular, admitting everyone does not by itself preserve independent actions once bounded signals are dominated by the history.

\section{Conclusion}\label{sec:conclusion}

An action can be valuable information for later individuals, but its information content depends on why the person acted and what was observed beforehand. When access to a history also determines eligibility to contribute, a positive price selects people with relatively weak private information. In the one-founder case, every buyer copies the founding recommendation. With a richer history, some buyers can contradict the recommendation and the record can become more accurate. A price bounded away from zero nevertheless prevents complete learning when only post-access actions are retained.

The same selected population can supply an informative record through consequential preliminary choices. A positive probability of implementing the choice made before access gives each buyer an incentive to follow private information at that stage. Under the fixed effective fee and pooled myopic pricing policies studied here, enough informative choices continue to enter for the archive to learn the state, including with bounded private beliefs. This changes the information supplied by customers without requiring them to purchase access before receiving their signals.

The reporting rule also changes the value of the service. Buyers are less willing to pay when they may have to implement a preliminary choice, and that possibility sacrifices current accuracy. The worked example shows how accumulating independent evidence can eventually improve both revenue and decisions despite these initial losses. The result provides a reason to distinguish judgments made before exposure from decisions made afterward, while keeping the costs and commitment requirements of that distinction explicit.

\clearpage
\appendix
\section{Proofs and additional calculations}\label{app:proofs}

\subsection{The value of observation and the founding benchmark}\label{app:value}

\begin{proof}[Proof of Lemma~\ref{lem:access}]
Fix $x\geq0$. Without access the individual chooses $+1$. Observing log odds $l$ changes the optimal action only when $l<-x$. Conditional on $X=x$, the joint probability measures of the positive and negative states and the archive are $e^x\nu_{+,n}(dl)/(1+e^x)$ and $\nu_{-,n}(dl)/(1+e^x)$. Switching to the negative action on $l<-x$ improves accuracy by their difference. Since $d\nu_{+,n}/d\nu_{-,n}(l)=e^l$, this gives \eqref{eq:generalvalue}.

Reflection gives $V_n(-x)=V_n(x)$. For $x\geq0$ the integrand can be written as $(1-e^{x+l})_+/(1+e^x)$, which is continuous and nonincreasing in $x$ and bounded by one. Dominated convergence establishes continuity. If the value is positive at a larger $x$, a set of realizations of positive measure contributes there and has strictly larger contributions at every smaller $x$. Thus the value is strictly decreasing while positive. Perfect information improves the private decision's accuracy by exactly $1-\sig(|x|)$; no experiment can do better, proving \eqref{eq:perfectbound}.

At a positive fee below $V_n(0)$, these properties give a unique positive cutoff, unless the entire bounded support is admitted. Boundary indifferences have probability zero. From \eqref{eq:signals},
\[
 \int_{-b}^b f_+(x)\,dx=\int_{-b}^b f_-(x)\,dx=D(b).
\]
The incoming signal is independent of the actual archive conditional on the state. Hence the probability of entry is $D(b)$ in either state, even after conditioning on that archive. Nonentry and waiting times have the same neutrality property. Starting from the symmetric initial experiment, the symmetric cutoff and optimal action rule preserve symmetry at subsequent sizes. This justifies the value calculation recursively.
\end{proof}

\begin{proof}[Proof of Proposition~\ref{prop:founder}]
Equation~\eqref{eq:foundervalue} follows by comparing the strength $|x|$ with $h$. If $|x|<h$, either realization of $Y$ determines the optimal action. Because $Y$ has accuracy $r$ in each state and is conditionally independent of $X$, following $Y$ also has accuracy $r$ conditional on $X=x$. If $|x|\geq h$, the individual's action is the same with or without access. The entry cutoff is therefore \eqref{eq:foundercutoff}.

Every buyer's signal has strength below $h$, so every buyer chooses $Y$. Entry has likelihood ratio one by Lemma~\ref{lem:access}; conditional on $Y$, the new action is deterministic and also has likelihood ratio one. Induction gives \eqref{eq:freeze}. Entry probability $D(b(c))$ is positive and the same at every arrival. Conditional independence yields infinitely many buyers almost surely. All their actions are incorrect exactly when $Y\ne\theta$, an event of probability $1-r$.
\end{proof}

The confidence-disclosure calculation in Section~\ref{sec:confidence} is the same conditional argument. Symmetry makes $H=|L_0|$ uninformative about the state. Given $H=h$, the sign of $L_0$ is a binary experiment with state-conditional accuracy $\sig(h)$. Residual distinctions between initial histories with the same log odds carry no further information about the binary state. Applying the preceding proof conditional on $h$ gives \eqref{eq:disclosedvalue} and exact freezing at any positive fee with positive demand. At $h=0$, the conditional experiment has no value.

\subsection{Preliminary choices and the joint likelihood}\label{app:elicitation}

\begin{proof}[Proof of Proposition~\ref{prop:elicitation}]
The preliminary choice has positive payoff weight $\varepsilon_n$ and cannot affect the fee, the archive snapshot, the revised decision problem or any later payoff of the same individual. Its optimal direction is therefore the sign of the private posterior; the gain over the opposite report is $\varepsilon_n\tanh(|x|/2)$, strictly positive away from $x=0$. After access, the revised decision has positive weight $1-\varepsilon_n$ and posterior log odds $x+L_n$. Ties have probability zero under the continuous signal distribution. This proves \eqref{eq:twodecisions}.

Taking expectations before access gives \eqref{eq:payoff}. Subtracting the fee and comparing with the nonbuyer payoff gives $V_n(x)\geq c_n/(1-\varepsilon_n)$. Lemma~\ref{lem:access} supplies the cutoff and neutrality conclusions. The preliminary choice is made after payment, so neither access nor its cost is contingent on its direction. No inference about the reporter's honesty or competence is used in this incentive argument.
\end{proof}

Conditional on entry and an actual history with log odds $l$, the state-conditional signal density is $f_\theta(x)\ind\{|x|\leq b\}/D(b)$. Therefore the probability of a pair $(u,v)$ is
\begin{equation}
 \Pp_\theta(B=u,A=v\mid\HH_n,\text{entry})
 =\frac{\int_{C_{u,v}(l,b)}f_\theta(x)\,dx}{D(b)}.
 \label{eq:jointprob}
\end{equation}
The entry denominators cancel in the likelihood ratio, yielding \eqref{eq:pairupdate}. The preliminary marginal gives accuracy $q_b$ as in \eqref{eq:selectedaccuracy}. If $|l|\geq b$, the revised choice is deterministic and the pair's information comes entirely from that marginal. If $|l|<b$, the revised choice can further restrict the signal interval conditional on the preliminary sign. Adding two marginal log-likelihood ratios would generally count dependent evidence incorrectly.

\subsection{Pricing bounds and learning}\label{app:learning}

\begin{proof}[Proof of Theorem~\ref{thm:learning}]
We first establish the participation bounds. At size $n$, entry has equal probability in each state and is independent of the actual archive conditional on the state. Selecting the next member therefore leaves the state-conditional law of $\HH_n$ unchanged. The next record contains $\HH_n$ as a subrecord. Discarding the new pair reproduces the preceding experiment. Combining either experiment with an independent incoming private signal preserves this ordering, so \eqref{eq:refinement} holds.

Under the fixed effective fee, $b_n\geq b_0>0$, and we can take $b_{\min}=b_0$. Under pooled myopic pricing, positive information value near $x=0$ gives $R_0^*>0$. Revenue attains a maximum on $[0,1/2]$: demand is continuous at positive values on the range of $V_n$ because the signal distribution is continuous and $V_n$ strictly decreases where positive; all-buy boundaries add no atom; and revenue tends to zero as the price tends to zero. The provider can obtain at least initial optimal revenue by keeping the initial effective fee, so $R_n^*\geq R_0^*$. Since $R_n^*=p_n^*D(b_n^*)$, with $D\leq1$ and $p_n^*\leq1/2$, inequalities \eqref{eq:positivebounds} follow. Choose $b_{\min}=D^{-1}(2R_0^*)>0$. The inverse exists in the signal support; in particular $R_0^*<1/2$ because the information value is strictly below $1/2$ almost surely for a finite, imperfect seed and a continuous private signal.

Thus each next member arrives after a finite waiting time almost surely, with entry probability bounded below by $D(b_{\min})>0$. A countable intersection gives infinitely many completed member records. Under a fixed effective fee, and under the deterministic pooled-price tie rule, the cutoff sequence is a deterministic function of member size. Conditional on the state, the private signal selected at each new membership is independent of preceding selected signals and has the corresponding truncated density. Its preliminary sign is correct with probability $q_{b_n}\geq q_{\min}>1/2$.

Consider the estimator that ignores the seed and revised choices and uses the majority of $n$ preliminary signs. Conditional on either state, let $Z_j$ indicate that the $j$th preliminary sign is correct. The $Z_j$ are independent, bounded between zero and one, and have means at least $q_{\min}$. With $d=q_{\min}-1/2$, the standard exponential bound for independent bounded variables gives
\[
 \Pp_\theta\left(\sum_{j=1}^n Z_j\leq n/2\right)
 \leq\inf_{t>0}\exp(-tnd+nt^2/8)
 =\exp(-2nd^2).
\]
This treats a tied majority as an error, so it bounds any tie rule. The Bayes estimator based on the entire archive has no greater ex ante error. Equation~\eqref{eq:errorbound} follows.

Let $\pi_n=\Pp(\theta=+1\mid\HH_n)$. Full retention makes $\pi_n$ a bounded martingale, so it converges almost surely. The error bound implies $\E[\min(\pi_n,1-\pi_n)]\to0$. Bounded convergence gives $\pi_\infty\in\{0,1\}$ almost surely. Moreover, $\pi_\infty=\E[\ind\{\theta=+1\}\mid\HH_\infty]$. Conditional-expectation calibration then implies $\pi_\infty=\ind\{\theta=+1\}$ almost surely, proving \eqref{eq:learning}.

Finally, an entering member could ignore the private signal and use the archive's Bayes action. Conditioning on entry does not change the archive's ex ante distribution, because entry is independent of its realization conditional on the state and equally likely in both states. The optimal revised-choice error is therefore no larger than the archive-only Bayes error and tends to zero.
\end{proof}

For completeness, the lottery-schedule invariance used in Section~\ref{sec:cost} follows by induction. At a fixed archive experiment, the choices in \eqref{eq:twodecisions} and the entry condition at a given effective fee do not depend on the strictly positive magnitude of $\varepsilon_n$. The pooled effective-price maximizer is independent of it as well. Starting from the same initial experiment and using the same effective-price rule, \eqref{eq:jointprob} then generates the same distribution of the next record at every size. The lotteries are not revealed during record collection. Changing their probabilities affects implemented choices and actual revenue, but not the statistical law of the records.

\subsection{Implementation cost and the worked example}\label{app:example}

\begin{proof}[Proof of Proposition~\ref{prop:cost}]
The lottery is independent of both decisions, so \eqref{eq:implementederror} is exact. Since $q_{b_n}\leq q_{\rm all}<1$ and $e_{A,n}\to0$ by Theorem~\ref{thm:learning}, a fixed $\varepsilon>0$ gives \eqref{eq:errorfloor}. If instead $\varepsilon_n\to0$, both terms in \eqref{eq:implementederror} vanish. Conditional on $X=x$, revised accuracy exceeds preliminary accuracy in expectation by $V_n(x)$. Averaging over the entry set gives \eqref{eq:cost}, whose upper bound follows from \eqref{eq:perfectbound}. Maximizing $(1-\varepsilon_n)R_n(p)$ at a fixed experiment gives the revenue comparison.
\end{proof}

We give the example's accuracy calculation explicitly. Under \eqref{eq:exampledensity}, elementary integration gives \eqref{eq:exampleclosed} and $q_{\rm all}=\sig(1/2)$. At $r=3/4$ and effective fee $p=0.10$, the initial cutoff satisfies $\sig(b_0)=r-p=0.65$, giving $b_0=\log(13/7)$. With only post-access decisions retained, every buyer has accuracy $r$. Define
\begin{equation}
 D_0=D(b_0),\qquad G_0=D_0(r-q_{b_0}).
 \label{eq:initialgain}
\end{equation}
Relative to private-only decisions throughout the population, the initial archive raises expected accuracy by $G_0$. Thus the stationary baseline outcomes per arriving individual are
\begin{equation}
 W^{\rm base}=q_{\rm all}+G_0=0.727006,
 \qquad \Pi^{\rm base}=pD_0=0.060351.
 \label{eq:baseoutcomes}
\end{equation}
The preliminary-choice rule begins with the same entry set, so its initial outcomes are
\begin{equation}
 W_0=q_{\rm all}+(1-\varepsilon)G_0=0.721779,
 \qquad \Pi_0=(1-\varepsilon)pD_0=0.057333.
 \label{eq:startoutcomes}
\end{equation}

Under any fixed effective fee satisfying Theorem~\ref{thm:learning}, complete learning implies, for every fixed $x$,
\begin{equation}
 V_n(x)\longrightarrow1-\sig(|x|),\qquad
 b_n\longrightarrow b_\infty=\min\left\{\bar x,\log\frac{1-p}{p}\right\}.
 \label{eq:limitcutoff}
\end{equation}
Indeed, the archive-only Bayes error vanishes under each state. Conditional on fixed $X=x$, using that Bayes action therefore approaches accuracy one; optimal accuracy is at least this large and at most one. This proves the value limit. Monotonicity and continuity of the limiting value give the cutoff limit, with equality types irrelevant under the continuous signal distribution.

In the example, the support-edge limiting value $1-\sig(1)$ strictly exceeds $p$. Consequently $b_n=1$ for all sufficiently large $n$ and the entire population buys. Revised error then vanishes, whereas preliminary accuracy remains $\sig(1/2)$. This proves the limiting accuracy \eqref{eq:examplelimit} and revenue $(1-\varepsilon)p=0.095$.

The finite rows of Table~\ref{tab:comparison} start from the two founder realizations with probabilities $(r,1-r)$ in the positive state and reversed probabilities in the negative state. At each size, the program calculates $V_n$ from the full distribution, solves the effective entry condition, and applies \eqref{eq:jointprob} to every history. For an interval $[u,v]\subset[-1,1]$, its probabilities are
\begin{equation}
 \int_u^v f_+(x)\,dx=\frac{e^{v/2}-e^{u/2}}{2\sinh(1/2)},\qquad
 \int_u^v f_-(x)\,dx=\frac{e^{-u/2}-e^{-v/2}}{2\sinh(1/2)}.
 \label{eq:intervalprob}
\end{equation}
Thus no numerical integration or simulation is required. The recursion retains all 110,422 histories at twelve member records, without pruning. At that size, archive-only accuracy is $0.821113$, revised accuracy is $0.829940$, and implemented accuracy is $0.819566$. Distinguishing these objects explains why reaching full participation does not mean that decisions are already close to their limiting accuracy.

The same policy comparison can be expressed without the particular density. For a single-founder baseline with a nonempty proper entry set, put
\begin{equation}
 D_\infty=D(b_\infty),\qquad
 G_\infty=\E[(1-\sig(T))\ind\{T\leq b_\infty\}].
 \label{eq:finalgain}
\end{equation}
The preliminary-choice rule's limiting population accuracy is $q_{\rm all}+(1-\varepsilon)G_\infty$, and its revenue is $(1-\varepsilon)pD_\infty$. When $D_\infty>D_0$ and $G_\infty>G_0$, both exceed their stationary baseline counterparts whenever
\begin{equation}
 0<\varepsilon<\min\left\{1-\frac{D_0}{D_\infty},\,
                           1-\frac{G_0}{G_\infty}\right\}.
 \label{eq:epsiloncomparison}
\end{equation}
The two bounds in the example are $0.396491$ and $0.723084$. Hence $\varepsilon=0.05$ satisfies both with slack. These are long-run comparisons under fixed policies; no continuation-value optimization is used.

\subsection{The bound on post-access learning}\label{app:nonlearning}

\begin{proof}[Proof of Theorem~\ref{thm:nonlearning}]
Lemma~\ref{lem:access} implies \eqref{eq:cap}. Given current log odds $l$ and the effective support $[-b_n,b_n]$ of buyers, the probability of the positive action is \eqref{eq:actionprob}. On $|l|<b_n$, the truncated signal distributions have a strict monotone likelihood ratio, giving $0<Q_-<Q_+<1$. Both observed actions therefore have a nontrivial Bayesian likelihood increment. For either action event $C\subset[-b_n,b_n]$,
\begin{equation}
 \frac{\int_C f_+(x)\,dx}{\int_C f_-(x)\,dx}
 =\frac{\int_C e^x f_-(x)\,dx}{\int_C f_-(x)\,dx}
 \in[e^{-b_n},e^{b_n}].
 \label{eq:incrementbound}
\end{equation}
Thus every increment has absolute value at most $b_n\leq\bar b$.

If $|L_n|\geq\bar b$, all possible buyers follow its direction. Entry is neutral and the new action is deterministic, so $L_{n+1}=L_n$. If $|L_n|<\bar b$, the increment bound gives $|L_{n+1}|<2\bar b$. Induction proves \eqref{eq:globalcap}. Nonentry supplies no additional likelihood increment. If only finitely many members enter, the remaining archive is constant and the same conclusion holds.

The posterior based on the full retained record is a bounded martingale and hence converges. Since \eqref{eq:globalcap} is finite along almost every path, its limit is strictly between zero and one. If $|L_0|\leq K_0$ surely, applying $\sig$ gives $\sig(-K)\leq\Pp(\theta=+1\mid\HH_n)\leq\sig(K)$, so both the conditional and ex ante archive-only Bayes errors are at least $\sig(-K)$. This proves the stated uniform lower bound.
\end{proof}

The pooled-price floor \eqref{eq:baselinepricing} follows by the same refinement and maximization argument as in Appendix~\ref{app:learning}, now with only post-access actions and $\varepsilon_n=0$. It concerns the actual distribution generated by that recording policy. We do not identify it with the distribution generated by preliminary records.

We finish the fixed-fee limiting argument in the main text. At $0<c<V_0(0)$, refinement makes $b_n$ nondecreasing, with $b_0\leq b_n\leq\bar b$. Hence $b_n\to b_\infty>0$. The lower bound on entry probability ensures infinitely many members. Theorem~\ref{thm:nonlearning} gives $L_n\to L_\infty\in\mathbb R$. Suppose $|L_\infty|<b_\infty$. Eventually $(L_n,b_n)$ lies in a compact subset of the region where both actions have positive probability and $Q_+-Q_->0$. Continuity makes the absolute log-likelihood increment from either action bounded away from zero there. Infinitely many such updates contradict convergence of $L_n$. Therefore $|L_\infty|\geq b_\infty$.

If the inequality is strict, every member eventually copies. At equality, the interval of admitted signals capable of contradicting the current recommendation shrinks to an endpoint. Its conditional probability tends to zero by continuity of the signal distribution and $D(b_n)\geq D(b_0)>0$. The equality case does not require copying to begin at a finite date.

\subsection{Observing a sparse set of nonbuyers}\label{app:outsiders}

\begin{proof}[Proof of Proposition~\ref{prop:outsiders}]
Under either state, the magnitude $T$ has the same distribution. Conditional on $T=t$, the private action is correct with probability $\sig(t)$. Truncating below at $b$ therefore gives \eqref{eq:outsideaccuracy}, and $q^{\rm out}(b)$ is nondecreasing in $b$. To verify the label conditioning, fix the full past and its public labels $Z_{i-1}$. Given the common cutoff $b_i=b_i(Z_{i-1})$, the next labels have state-conditional probabilities $D(b_i)$, $[1-D(b_i)]\rho_i$ and $[1-D(b_i)](1-\rho_i)$. These probabilities are the same in both states, including conditional on the actual past archive. Reflection preserves the conditional archive experiment at every public label history, so Lemma~\ref{lem:access} applies recursively. Labels can affect the distribution of the information product without themselves supplying a directional likelihood increment.

At fixed fee $c>0$, Lemma~\ref{lem:access} bounds all buyer signals by $\bar b(c)=\log((1-c)/c)$. Because private beliefs are unbounded, $\delta=\Pp(T>\bar b(c))>0$. Signals in this tail never buy, whatever the current archive. Conditional on either state, the events that arrival $i$ has such a signal and its independent sampling coin succeeds are independent across arrivals, with probabilities $\delta\rho_i$. If $\sum_i\rho_i=\infty$, the independent-events Borel--Cantelli argument gives infinitely many recorded outsider actions almost surely. This establishes a permanent supply even though the full nonbuyer population is endogenously selected.

Use the filtration containing the entire record, the applicable access cutoffs and nondirectional public labels. An outsider action recorded at cutoff $b_i$ changes log odds by
\[
 \pm\log\frac{q^{\rm out}(b_i)}{1-q^{\rm out}(b_i)}.
\]
Its absolute value is at least $\log(q_{\rm all}/(1-q_{\rm all}))>0$. All cutoffs and sampling labels are known and state-neutral before their associated directional observation, so these are the appropriate conditional likelihood increments. The full posterior is a bounded martingale. If it converged to an interior value, finite log odds would converge and successive increments would vanish, contradicting infinitely many outsider increments bounded away from zero. Its limit must therefore lie in $\{0,1\}$; conditional-expectation calibration, as in the proof of Theorem~\ref{thm:learning}, identifies the endpoint with the true state.

If $\sum_i\rho_i<\infty$, only finitely many sampling coins succeed almost surely, so only finitely many outsiders are recorded. Each gives a finite likelihood ratio: the cutoff is bounded and both signal tails have positive probability. After the last outsider observation, the current log odds are finite. The subsequent path consists of buyer actions with the same signal cap, together with nondirectional labels and nonentry. The pathwise argument of Theorem~\ref{thm:nonlearning} bounds the remaining log odds by the larger of their current magnitude and $2\bar b(c)$. Thus the record cannot converge to full revelation.
\end{proof}

\subsection{The advance-access comparison}\label{app:advance}

Fix a symmetric archive experiment and let $V=V(X)$ denote its value after the private signal. Before the signal, otherwise identical individuals value access at $\E[V]$, so any advance fee strictly below this amount is accepted. Suppose the positive part of $V$ is continuously distributed and nondegenerate, as under the maintained densities and a nontrivial imperfect archive. For a positive spot fee $c$ with demand,
\begin{equation}
 \E[V]-c\Pp(V\geq c)
 =\E[(V-c)\ind\{V\geq c\}]+\E[V\ind\{0<V<c\}]>0.
 \label{eq:advance}
\end{equation}
The spot revenue function attains a positive maximum. The strict inequality at its maximizer leaves room for an advance fee below $\E[V]$ that earns more. This is a comparison at a fixed experiment and does not require a dynamic contract. It removes selection on the realized signal when eligibility is sold. By itself, it makes no guarantee that later post-exposure decisions continue to convey independent information when signals are bounded.


\begin{thebibliography}{99}
\bibitem[Acemoglu et~al.(2022)Acemoglu, Makhdoumi, Malekian, and Ozdaglar]{acemoglu2022}
Acemoglu, D., Makhdoumi, A., Malekian, A., \& Ozdaglar, A. (2022). Learning from reviews: The selection effect and the speed of learning. \emph{Econometrica}, \emph{90}(6), 2857--2899. \url{https://doi.org/10.3982/ECTA15847}.

\bibitem[Arieli et~al.(2022)Arieli, Koren, and Smorodinsky]{arieli2022}
Arieli, I., Koren, M., \& Smorodinsky, R. (2022). The implications of pricing on social learning. \emph{Theoretical Economics}, \emph{17}(4), 1761--1802. \url{https://doi.org/10.3982/TE3842}.

\bibitem[Azrieli et~al.(2018)Azrieli, Chambers, and Healy]{azrieli2018}
Azrieli, Y., Chambers, C. P., \& Healy, P. J. (2018). Incentives in experiments: A theoretical analysis. \emph{Journal of Political Economy}, \emph{126}(4), 1472--1503. \url{https://doi.org/10.1086/698136}.

\bibitem[Bikhchandani et~al.(1992)Bikhchandani, Hirshleifer, and Welch]{bhw1992}
Bikhchandani, S., Hirshleifer, D., \& Welch, I. (1992). A theory of fads, fashion, custom, and cultural change as informational cascades. \emph{Journal of Political Economy}, \emph{100}(5), 992--1026. \url{https://doi.org/10.1086/261849}.

\bibitem[Bikhchandani et~al.(2024)Bikhchandani, Hirshleifer, Tamuz, and Welch]{bikhchandani2024}
Bikhchandani, S., Hirshleifer, D., Tamuz, O., \& Welch, I. (2024). Information cascades and social learning. \emph{Journal of Economic Literature}, \emph{62}(3), 1040--1093. \url{https://doi.org/10.1257/jel.20241472}.

\bibitem[Frey and van de Rijt(2021)]{frey2021}
Frey, V., \& van de Rijt, A. (2021). Social influence undermines the wisdom of the crowd in sequential decision making. \emph{Management Science}, \emph{67}(7), 4273--4286. \url{https://doi.org/10.1287/mnsc.2020.3713}.

\bibitem[Guarino et~al.(2011)Guarino, Harmgart, and Huck]{guarino2011}
Guarino, A., Harmgart, H., \& Huck, S. (2011). Aggregate information cascades. \emph{Games and Economic Behavior}, \emph{73}(1), 167--185. \url{https://doi.org/10.1016/j.geb.2011.01.003}.

\bibitem[Herrera and H\"orner(2013)]{herrera2013}
Herrera, H., \& H\"orner, J. (2013). Biased social learning. \emph{Games and Economic Behavior}, \emph{80}, 131--146. \url{https://doi.org/10.1016/j.geb.2012.12.006}.

\bibitem[Ifrach et~al.(2019)Ifrach, Maglaras, Scarsini, and Zseleva]{ifrach2019}
Ifrach, B., Maglaras, C., Scarsini, M., \& Zseleva, A. (2019). Bayesian social learning from consumer reviews. \emph{Operations Research}, \emph{67}(5), 1209--1221. \url{https://doi.org/10.1287/opre.2019.1861}.

\bibitem[Kultti and Miettinen(2007)]{kultti2007}
Kultti, K. K., \& Miettinen, P. A. (2007). Herding with costly observation. \emph{The B.E. Journal of Theoretical Economics}, \emph{7}(1), Article 28. \url{https://doi.org/10.2202/1935-1704.1320}.

\bibitem[Markovich and Yehezkel(2024)]{markovich2024}
Markovich, S., \& Yehezkel, Y. (2024). ``For the public benefit'': Data policy in platform markets. \emph{Journal of Economics \& Management Strategy}, \emph{33}(3), 652--685. \url{https://doi.org/10.1111/jems.12588}.

\bibitem[Nocke et~al.(2011)Nocke, Peitz, and Rosar]{nocke2011}
Nocke, V., Peitz, M., \& Rosar, F. (2011). Advance-purchase discounts as a price discrimination device. \emph{Journal of Economic Theory}, \emph{146}(1), 141--162. \url{https://doi.org/10.1016/j.jet.2010.07.008}.

\bibitem[Parakhonyak and Vikander(2023)]{parakhonyak2023}
Parakhonyak, A., \& Vikander, N. (2023). Information design through scarcity and social learning. \emph{Journal of Economic Theory}, \emph{207}, Article 105586. \url{https://doi.org/10.1016/j.jet.2022.105586}.

\bibitem[Peng et~al.(2025)Peng, Rao, Sun, and Xiao]{peng2025}
Peng, D., Rao, Y., Sun, X., \& Xiao, E. (2025). Optional disclosure and observational learning. \emph{Journal of Economic Behavior \& Organization}, \emph{229}, Article 106817. \url{https://doi.org/10.1016/j.jebo.2024.106817}.

\bibitem[Smirnov and Starkov(2025)]{smirnov2025}
Smirnov, A., \& Starkov, E. (2025). Designing social learning. \emph{European Economic Review}, \emph{178}, Article 105113. \url{https://doi.org/10.1016/j.euroecorev.2025.105113}.

\bibitem[Smith and S\o rensen(2000)]{smith2000}
Smith, L., \& S\o rensen, P. N. (2000). Pathological outcomes of observational learning. \emph{Econometrica}, \emph{68}(2), 371--398. \url{https://doi.org/10.1111/1468-0262.00113}.

\bibitem[Song(2016)]{song2016}
Song, Y. (2016). Social learning with endogenous observation. \emph{Journal of Economic Theory}, \emph{166}, 324--333. \url{https://doi.org/10.1016/j.jet.2016.09.005}.

\bibitem[Xie and Shugan(2001)]{xie2001}
Xie, J., \& Shugan, S. M. (2001). Electronic tickets, smart cards, and online prepayments: When and how to advance sell. \emph{Marketing Science}, \emph{20}(3), 219--243. \url{https://doi.org/10.1287/mksc.20.3.219.9765}.
\end{thebibliography}
\end{document}